\documentclass[conference]{IEEEtran}
\IEEEoverridecommandlockouts
\usepackage{cite}
\usepackage{amsmath,amssymb,amsfonts}
\usepackage{algorithmic}
\usepackage{amsthm}              
\usepackage{graphicx}
\usepackage{textcomp}
\usepackage{xcolor}
\usepackage{bm}
\usepackage{eucal}
\usepackage{mathrsfs}
\def\BibTeX{{\rm B\kern-.05em{\sc i\kern-.025em b}\kern-.08em
    T\kern-.1667em\lower.7ex\hbox{E}\kern-.125emX}}

\newtheorem{theorem}{Theorem}

\usepackage{etoolbox}     
\makeatletter
\def\myaffilspace{-0.35em}  
\patchcmd{\@IEEEauthorblockN}{\\}{\\[\myaffilspace]}{}{}
\patchcmd{\@IEEEauthorblockA}{\\}{\\[\myaffilspace]}{}{}
\makeatother

\begin{document}

\title{Low-Latency Terrestrial Interference Detection for Satellite-to-Device Communications}

\author{%
  \IEEEauthorblockN{%
    Runnan~Liu\IEEEauthorrefmark{1},
    Weifeng~Zhu\IEEEauthorrefmark{2},
    Shu~Sun\IEEEauthorrefmark{3},
    Wenjun~Zhang\IEEEauthorrefmark{3}}
  \\[-1em]
  \IEEEauthorblockA{\IEEEauthorrefmark{1} China Academy of Information and Communications Technology, Beijing 100191, China}
  \\[-1em]
  \IEEEauthorblockA{\IEEEauthorrefmark{2} Department of Electrical and Electronic Engineering, The Hong Kong Polytechnic University, Hong Kong SAR, China}
  \\[-1em]
  \IEEEauthorblockA{\IEEEauthorrefmark{3}School of Information Science and Electronic Engineering, Shanghai Jiao Tong University, Shanghai 200240, China}
}

\maketitle

\begin{abstract}
Direct satellite-to-device communication is a promising future direction due to its lower latency and enhanced efficiency. However, intermittent and unpredictable terrestrial interference significantly affects system reliability and performance. Continuously employing sophisticated interference mitigation techniques is practically inefficient. Motivated by the periodic idle intervals characteristic of burst-mode satellite transmissions, this paper investigates online interference detection frameworks specifically tailored for satellite-to-device scenarios. We first rigorously formulate interference detection as a binary hypothesis testing problem, leveraging differences between Rayleigh (no interference) and Rice (interference present) distributions. Then, we propose a cumulative sum (CUSUM)-based online detector for scenarios with known interference directions, explicitly characterizing the trade-off between detection latency and false alarm rate, and establish its asymptotic optimality. For practical scenarios involving unknown interference direction, we further propose a generalized likelihood ratio (GLR)-based detection method, jointly estimating interference direction via the Root-MUSIC algorithm. Numerical results validate our theoretical findings and demonstrate that our proposed methods achieve high detection accuracy with remarkably low latency, highlighting their practical applicability in future satellite-to-device communication systems.
\end{abstract}

\begin{IEEEkeywords}
satellite communications, interference detection, change detection, generalized likelihood ratio, CUSUM algorithm, Root-MUSIC
\end{IEEEkeywords}

\section{Introduction}
Satellite communications have become an essential component of the evolving wireless network ecosystem, playing a central role in emerging paradigms such as sixth-generation (6G) wireless systems and non-terrestrial networks (NTNs) \cite{chen2020systemintegration, zhu2022integrated, lin2021pathto6g}.
Depending on how satellite signals are delivered to user devices, satellite communication systems can be broadly categorized into two distinct architectures: conventional relay-based satellite communications and emerging direct satellite-to-device communications \cite{bakhsh2025multiSatelliteMIMO}.
Compared with the relay-based approach, \textit{direct satellite-to-device} technology considerably reduces transmission latency and overall system complexity, while significantly enhancing mobility support and communication efficiency \cite{pasandi2024survey}.
Nevertheless, due to the direct exposure of user devices to terrestrial environments and the inherently limited antenna gain of handheld terminals, direct satellite-to-device services are inherently susceptible to various sources of \textit{terrestrial interference} \cite{sharma2012satellitecognitive}, such as radar emissions \cite{zheng2019radarcommunicationcoexistence}, ambient electromagnetic radiation \cite{ghile2016influence}, and cosmic radiation \cite{hoeffgen2020investigating}.
In practice, such interference signals are prevalent, diverse, and inherently random, substantially degrading the signal-to-interference-plus-noise ratio (SINR) and thereby undermining the performance and reliability of satellite communication links \cite{sharma2012satellitecognitive}.

To mitigate terrestrial interference, numerous sophisticated approaches, such as adaptive beamforming and interference cancellation techniques, have been extensively studied in the literature \cite{cottatellucci2006interference,peng2022integrating,xie2023interference}.
However, advanced interference mitigation algorithms typically involve high computational complexity and substantial operational costs.
Furthermore, terrestrial interference signals usually occur \textit{sporadically and unpredictably}, rendering the permanent use of complex interference mitigation techniques inefficient and impractical.
Consequently, an operationally preferable strategy is to first develop an efficient interference detector that promptly and accurately identifies interference occurrences, subsequently triggering sophisticated interference mitigation and localization methods only upon positive detection.
Although crucial from practical perspectives, interference detection in direct satellite-to-device communications has not received sufficient attention in previous literature.

In this paper, we conduct a rigorous and comprehensive investigation into online terrestrial interference detection, specifically tailored for direct satellite-to-device communication systems. Capitalizing on the unique operational characteristics of satellite communication, where satellites typically operate in a periodic burst-mode fashion consisting of alternating burst transmission and idle phases, we recognize that signals received at the user equipment (UE) during idle phases exclusively indicate terrestrial interference. Motivated by this insight, we propose detection algorithms that explicitly utilize received signals in these idle intervals, thereby enabling rapid and accurate identification of interference events.

First, we formulate the interference detection task as a binary hypothesis testing problem, explicitly characterizing the statistical differences in received signal amplitudes (Rayleigh distribution under interference-free conditions versus Rice distribution in the presence of interference). Under the simplified scenario where interference direction is known, we develop an online detection scheme based on the cumulative sum (CUSUM) algorithm, and analytically characterize the fundamental trade-off between detection delay and false alarm rate. Notably, we establish analytical guarantees on the asymptotic optimality of the proposed CUSUM-based approach. In the more practical case where the interference direction is unknown, we further propose a generalized likelihood ratio (GLR)-based detection method, whereby the unknown interference direction is jointly estimated using a computationally efficient and high-resolution Root- MUltiple SIgnal Classification (Root-MUSIC) algorithm. Extensive numerical results validate the robustness and effectiveness of our proposed methods, demonstrating that the theoretical predictions closely match our simulated performance. Remarkably, these algorithms successfully achieve impressively low average detection delays even in challenging low-interference scenarios, highlighting their practical utility for realistic deployments in satellite-direct-to-phone communication systems.

\section{System Model}

Consider a satellite-to-device communication scenario where the UE, equipped with an $M$-element antenna array, directly receives signals from a satellite. As illustrated in Fig.~\ref{fig:ServiceCycle}, the satellite operates in periodic service cycles, each consisting of a burst transmission phase followed by an idle phase. In the $k$-th service cycle, the burst transmission phase (shown as the high-level segment) delivers downlink packets to the UE, while the subsequent idle phase (shown as the low-level segment) corresponds to periods when the satellite ceases transmissions. During these idle intervals, any received signal at the UE must originate from external interference sources, since the satellite itself is silent. 
\textit{This unique feature makes the idle phase particularly advantageous for interference detection.}
By leveraging this property, our approach specifically monitors the UE’s received signal during idle phases to enable rapid and accurate detection of external interference.
\begin{figure}[t]
  \centering
    \includegraphics[width=0.9\linewidth]{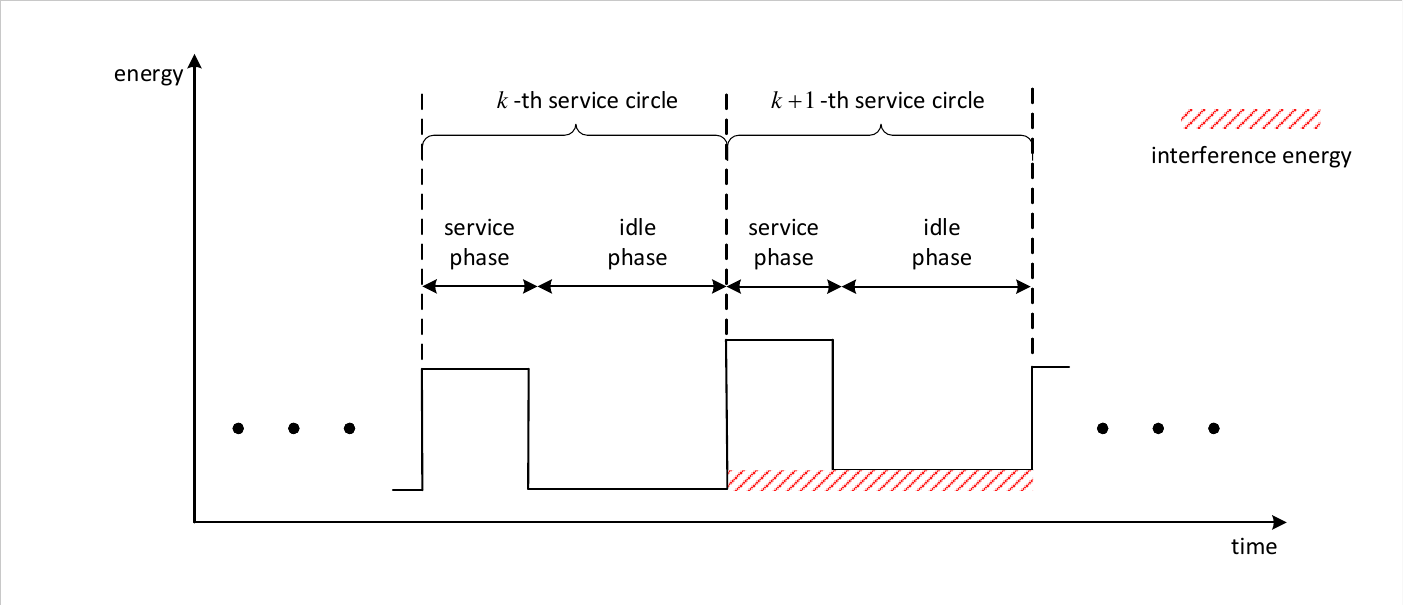}
  \caption{Illustration of one service cycle: burst transmission phase followed by idle phase.}
  \label{fig:ServiceCycle}
\end{figure}

Let $\mathbf{y}_k\in\mathbb{C}^{M\times1}$ denote the received signal vector at the UE during the idle phase of the $k$-th service interval.
In practice, two distinct scenarios may arise.
In the absence of interference, the received signal consists purely of additive white Gaussian noise (AWGN), expressed as $\mathbf{y}_k=\mathbf{n}_k,~\forall k>0$, where $\mathbf{n}_k\sim \mathcal{CN}(0,\sigma_n^2\mathbf{I}_M)$ denotes the AWGN vector with covariance $\sigma_n^2\mathbf{I}_M$.
Alternatively, when interference appears at an unknown change-point, denoted as $\nu$, the received signal is given by $\mathbf{y}_k=\mathbf{n}_k,~k<\nu$ and $\mathbf{y}_k=\mathbf{t}_k+\mathbf{n}_k,~k\ge\nu$, where $\mathbf{t}_k\in \mathbb{C}^{M \times 1}$ represents the interference signal vector.
This results in a binary hypothesis testing problem, formulated as
\begin{equation}
\begin{aligned}
&H_0~(\text{no interference}): &&\mathbf{y}_k=\mathbf{n}_k, &&\forall k>0, \\
&H_1~(\text{with interference}): &&\mathbf{y}_k=\mathbf{n}_k, &&0 < k < \nu,\\
& &&\mathbf{y}_k=\mathbf{t}_k+\mathbf{n}_k, &&k\ge\nu.
\end{aligned}
\label{Eq_Goal}
\end{equation}

In this paper, we model the interference signal $\mathbf{t}_k$ as a plane-wave impinging on the antenna array from a specific direction of arrival (DoA), denoted by $\theta^{(I)}$. This assumption is practical in satellite communications, given the large distances involved, resulting in approximately planar wavefronts. Thus, the interference vector can be modeled as a scaled steering vector
\begin{equation}
\mathbf{t}_k = \sigma^{(I)} e^{j\phi_k} \mathbf{a}(\theta^{(I)}),
\end{equation}
where $\sigma^{(I)}$ denotes the interference amplitude, and the random phase $\phi_k\sim\textit{U}(0,2\pi)$ denotes typical random phase variations. Considering a uniform linear array (ULA) at the UE, the steering vector $\mathbf{a}(\theta)$ can be expressed as
\begin{equation}
\mathbf{a}(\theta) = \frac{1}{\sqrt{M}}\begin{bmatrix}
1 & e^{j\pi \sin(\theta)} & \dots & e^{j\pi(M-1) \sin(\theta)}
\end{bmatrix}^{\mathrm{T}},
\end{equation}
where the normalization factor $1/\sqrt{M}$ ensures unit norm.

To simplify the detection task and enhance sensitivity, we project the received signal vector $\mathbf{y}_k$ onto the steering vector $\mathbf{a}(\theta^{(I)})$, yielding a scalar test statistic
\begin{equation}
z_k = \mathbf{a}(\theta^{(I)})^H \mathbf{y}_k.
\end{equation}
The amplitude $r_k = |z_k|$ serves as the decision metric. This projection is effectively a \textit{matched filter} operation, widely recognized in radar, sonar, and communication systems as an optimal linear detector under AWGN conditions, maximizing the output SNR.

Under hypothesis $H_0$, the statistic $z_k$ is complex Gaussian distributed as $\mathcal{CN}(0, \sigma_n^2)$, resulting in a Rayleigh-distributed amplitude $r_k$ with probability density function (PDF)
\begin{equation}
f_{R}^{(0)}(r) = \frac{2r}{\sigma_n^2} \exp\left(-\frac{r^2}{\sigma_n^2}\right), \quad r \geq 0.
\label{PDF_R_0}
\end{equation}
Under hypothesis $H_1$, the presence of interference introduces a deterministic component, leading $r_k$ to follow a Rice distribution
\begin{equation}
f_{R}^{(1)}(r)=\frac{2r}{\sigma_n^2}\exp\left(-\frac{r^2+(\sigma^{(I)})^2}{\sigma_n^2}\right)I_0\left(\frac{2\sigma^{(I)}r}{\sigma_n^2}\right), \quad r \geq 0,
\label{PDF_R_1}
\end{equation}
where $I_0(\cdot)$ denotes the modified Bessel function of the first kind and order zero.

Consequently, the interference detection task defined in (\ref{Eq_Goal}) simplifies to distinguishing between Rayleigh-distributed amplitudes under $H_0$ and Rice-distributed amplitudes under $H_1$, formally as
\begin{equation}
r_k \sim
\begin{cases}
f_{R}^{(0)}(r), & k < \nu, \\
f_{R}^{(1)}(r), & k \geq \nu.
\end{cases}
\end{equation}
Our objective is to develop efficient detection algorithms that minimize the detection delay while maintaining a predefined false alarm rate (FAR).

\section{On-line Interference Change Detection with Known Direction}
In this section, we consider the scenario where the interference direction $\theta^{(I)}$ is known a priori. This idealized setting serves as a performance benchmark and offers theoretical insights for the more practical case in which the interference direction is unknown.

\subsection{The CUSUM-based Detector for Known Interference Direction}
At each service cycle $k$, we define the log-likelihood ratio (LLR) between the event $r_k \sim f_{R}^{(1)}(r)$ and $r_k \sim f_{R}^{(0)}(r)$ as
\begin{equation} 
\ell(r_k) = \log \frac{f_R^{(1)}(r_k)}{f_R^{(0)}(r_k)},~\forall k. 
\label{LLR}
\end{equation} 
Next, substituting (\ref{PDF_R_0}) and (\ref{PDF_R_1}) into (\ref{LLR}), we can obtain
\begin{equation} 
\ell(r_k) = \log I_0\left(\frac{2\sigma^{(I)} r_k}{\sigma_n^2}\right) - \frac{(\sigma^{(I)})^2}{\sigma_n^2}, \quad \forall k,
\end{equation} 
where $I_0(\cdot)$ denotes the modified Bessel function of the first kind with order zero.

A straightforward detection approach is to make a decision based solely on each $\ell(r_k),~k>0$. 
However, this sample-by-sample method is highly sensitive to noise, leading to poor detection accuracy, particularly under low SNR conditions or subtle changes.
To face this challenge, we introduce a CUSUM-based interference detector.
Specifically, we first define a CUSUM statistic at time $k$ as
\begin{equation} S_k = \max_{1 \leq n \leq k}\sum_{i=n}^{k}\ell(r_i), 
\label{eq:Def_Sk}
\end{equation} 
representing the maximum cumulative sum of LLRs over all possible intervals ending at time $k$. Intuitively, $S_k$ quantifies the strongest evidence that a change has occurred at some prior instant.
A change is declared as soon as $S_k$ exceeds a predetermined threshold $h$. The detection (stopping) time is thus given by 
\begin{equation} \tau(h) = \inf\{k \geq 1: S_k \geq h\}. \end{equation}

However, directly calculating $S_k$ in its definition (\ref{eq:Def_Sk}) requires computational effort proportional to $k$, as it involves evaluating the cumulative sums for all possible change-points. To circumvent this computational complexity, CUSUM leverages a simple recursive form. Specifically, the statistic $S_k$ can be equivalently updated using the following iterative procedure
\begin{equation}
g_k = \max(0, g_{k-1} + \ell(r_k)), \quad g_0 = 0.
\end{equation}
In practice, this recursive implementation dramatically simplifies computations and storage, enabling real-time, low-complexity detection. The detection rule then simplifies to
\begin{equation}
\tau(h) = \inf\{k \geq 1: g_k \geq h\}.
\end{equation}

The CUSUM algorithm possesses several desirable properties for online interference detection. 
First, it is optimal in terms of minimizing the worst-case expected detection delay under a given false alarm constraint. 
Second, its recursive form requires the minimal computational resources and memory, enabling efficient practical deployment. 
Last, due to its inherent reset mechanism (statistic reset to zero after crossing the threshold), the algorithm naturally offers robustness against transient observations and false detections, ensuring reliable and continual operation.

\subsection{Performance Analysis}
The performance of the CUSUM algorithm is typically evaluated using two metrics, i.e., the conditional average detection delay (CADD) and the false alarm rate (FAR).
The CADD measures the expected delay in detection under hypothesis $H_1$, defined as
\begin{equation}
    T_C(h) = \mathbb{E}_{H_1}[\tau(h)],
\end{equation}
where $\tau(h)$ is the stopping time when the detection statistic reaches the threshold $h$.

The FAR is usually expressed in terms of the reciprocal of the average run length (ARL) under the null hypothesis $H_0$, given by
\begin{equation}
    R_F(h) = \frac{1}{\mathbb{E}_{H_0}[\tau(h)]}.
\end{equation}

It is worth noting that the metrics of FAR and CADD are extensively employed in the literature to quantify the false alarm performance and the detection delay efficiency of change detection algorithms. In the following, we characterize the trade-off between FAR and CADD for the detectors proposed in this work.

\begin{theorem}
Consider observations $r_k$ obtained by projecting the received signals onto a known steering vector. Under hypothesis $H_1$, the observation $r_k$ follows a Rice distribution; whereas under hypothesis $H_0$, it follows a Rayleigh distribution. The asymptotic relationship between the CADD $T_C(h)$ and the FAR $R_F(h)$ as the detection threshold $h \rightarrow \infty$ is given by
\begin{equation}
    \lim_{h \rightarrow \infty}\frac{T_C(h)}{-\log R_F(h)} 
    = \frac{1}{I(\sigma)} 
    \approx \frac{\sigma^2 + a_N}{\sigma^4},
\end{equation}
where $\sigma = \sigma_I/\sigma_n$ represents the normalized interference-to-noise ratio (INR), and the constant $I$ denotes the average per-sample log-likelihood ratio under $H_1$
\begin{equation}
\begin{aligned}
    I(\sigma) &= \mathbb{E}_{H_1}[\ell(r)]\\
    &= \int_0^\infty \left[\log I_0\left(\frac{2\sigma^{(I)}r}{\sigma_n^2}\right) - \frac{\sigma^{(I)^2}}{\sigma_n^2}\right] f_{R}^{(1)}(r)\,dr.
\end{aligned}
\end{equation}

Moreover, we obtain the following tight bounds for this asymptotic relationship:
\begin{equation}
    \frac{\sigma^2 + 1}{\sigma^4} \leq \lim_{h \rightarrow \infty} \frac{T_C(h)}{-\log R_F(h)}
    \leq \frac{\sigma^2 + 3}{\sigma^4}.
\end{equation}
\end{theorem}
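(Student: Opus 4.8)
The argument separates into a change-detection part, which delivers the limiting ratio $1/I(\sigma)$, and an information-theoretic part, which bounds the Kullback--Leibler number $I(\sigma)$ between the Rice and Rayleigh laws. For the first part, the plan is to exploit that the increments $\ell(r_k)$ are i.i.d.\ with $\mathbb{E}_{H_1}[\ell]=I(\sigma)>0$ and $\mathbb{E}_{H_0}[\ell]<0$. Under $H_1$ the recursion $g_k=\max(0,g_{k-1}+\ell(r_k))$ has positive drift $I(\sigma)$ and, after a transient that does not affect the first-order rate, behaves as a random walk; Wald's identity together with the standard renewal control of the overshoot at level $h$ yields $T_C(h)=h/I(\sigma)+o(h)$. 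Under $H_0$ the reflected walk has negative drift, so reaching $h$ is a rare event; applying optional stopping to the likelihood-ratio martingale $\exp(\sum_{i\le k}\ell(r_i))$ gives $\mathbb{E}_{H_0}[\tau(h)]=e^{h(1+o(1))}$, hence $-\log R_F(h)=\log\mathbb{E}_{H_0}[\tau(h)]=h(1+o(1))$. Dividing the two estimates proves $\lim_{h\to\infty}T_C(h)/(-\log R_F(h))=1/I(\sigma)$, which is Lorden's first-order optimality relation specialized to our Rice-versus-Rayleigh increments.

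For the bounds I would first reduce $I(\sigma)$ to a single Bessel expectation. Since the phase $\phi_k$ is uniform, $r=|z|$ is a sufficient statistic and, writing $\tilde r=r/\sigma_n$, the log-likelihood ratio is $\ell=\log I_0(2\sigma\tilde r)-\sigma^2$, so that $I(\sigma)=\mathbb{E}_{H_1}[\log I_0(2\sigma\tilde r)]-\sigma^2$, where under $H_1$ the quantity $\tilde r^2$ is a scaled noncentral $\chi^2_2$ variable with the closed-form moments $\mathbb{E}_{H_1}[\tilde r^2]=1+\sigma^2$ and $\mathbb{E}_{H_1}[\tilde r^4]=2+4\sigma^2+\sigma^4$. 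The idea is then to sandwich $\log I_0$ between elementary functions and integrate. For the upper side I would use the Amos-type bound $I_1(x)/I_0(x)\le x/\sqrt{x^2+4}$, integrate it to get $\log I_0(x)\le\sqrt{x^2+4}-2$, substitute $x=2\sigma\tilde r$, and apply Jensen to the concave square root, obtaining $\mathbb{E}_{H_1}[\log I_0(2\sigma\tilde r)]\le 2\sqrt{1+\sigma^2(1+\sigma^2)}-2$. Subtracting $\sigma^2$ and invoking the elementary inequality $2\sqrt{1+\sigma^2+\sigma^4}-2-\sigma^2\le\sigma^4/(\sigma^2+1)$ (which, after clearing denominators and squaring, reduces to $\sigma^4\ge0$) yields $I(\sigma)\le\sigma^4/(\sigma^2+1)$, i.e.\ the lower envelope $(\sigma^2+1)/\sigma^4$ for $1/I(\sigma)$.

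The matching lower bound $I(\sigma)\ge\sigma^4/(\sigma^2+3)$, equivalently the upper envelope $(\sigma^2+3)/\sigma^4$, is where the difficulty lies. A natural minorant such as $\log I_0(x)\ge (x^2/4)/(1+x^2/16)$ is available, but it is concave in $\tilde r^2$, so Jensen points the wrong way and does not deliver a closed-form lower bound on the expectation. The plan here is either to find a minorant of $\log I_0$ that is convex in $\tilde r^2$ (so that Jensen yields a lower bound evaluated at $\mathbb{E}_{H_1}[\tilde r^2]=1+\sigma^2$), or to pass to the equivalent identity $I(\sigma)=\sigma^2-I_{\mathrm{mut}}(\phi;z)$, in which the Jensen gap of the mixture-versus-noise divergence equals the phase--observation mutual information, converting the claim into the upper bound $I_{\mathrm{mut}}(\phi;z)\le 3\sigma^2/(\sigma^2+3)$. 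Either route pins the approximation constant at $a_N\approx2$, consistent with the small-INR expansion $I(\sigma)=\tfrac12\sigma^4+O(\sigma^6)$ read off from $\log I_0(x)=x^2/4-x^4/64+\cdots$ and with the numerical fit $I(\sigma)\approx\sigma^4/(\sigma^2+2)$.

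The first-order change-detection asymptotics are essentially routine once the drift signs and the overshoot/martingale estimates are assembled, and the moment evaluations and the upper bound are explicit. The genuine obstacle is the lower bound on $I(\sigma)$: one must minorize the transcendental $\log I_0$ by a function that is simultaneously integrable against the Rice density in closed form and convex in $\tilde r^2$, and then check that the clean constant $3$ survives across the operating range of the INR. I expect this to require particular care in the high-INR regime, where the exact information number grows as $\sigma^2-\log\sigma+O(1)$ rather than $\sigma^2-\mathrm{const}$, so that the rational envelope is sharp only up to moderately large $\sigma$ and the lower bound should be read within that regime.
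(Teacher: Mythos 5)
The paper's own proof of this theorem is missing---the proof environment contains only the dangling sentence ``Proof is provided in .''---so there is nothing concrete to compare your argument against beyond the remark that the intended proof uses ``asymptotic properties of the CUSUM algorithm along with bounding techniques involving the log-Bessel function,'' which is exactly the strategy you adopt. Your first part (Wald's identity and the overshoot control under $H_1$ giving $T_C(h)=h/I+o(h)$, the likelihood-ratio martingale under $H_0$ giving $\log\mathbb{E}_{H_0}[\tau(h)]=h(1+o(1))$) is the standard Page--Lorden first-order asymptotics and is correctly assembled. Your upper bound $I(\sigma)\le\sigma^4/(\sigma^2+1)$ is also complete: the chain $\log I_0(x)\le\sqrt{x^2+4}-2$, Jensen at $\mathbb{E}_{H_1}[\tilde r^2]=1+\sigma^2$, and the final comparison do check out (clearing denominators in $2\sqrt{1+\sigma^2+\sigma^4}\le 2+\sigma^2+\sigma^4/(\sigma^2+1)$ leaves exactly a surplus of $\sigma^4$ on the right).

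The genuine gap is the one you flag yourself: the lower bound $I(\sigma)\ge\sigma^4/(\sigma^2+3)$, equivalently the upper envelope $(\sigma^2+3)/\sigma^4$ on the limit ratio, is never established---you propose two routes (a minorant of $\log I_0$ convex in $\tilde r^2$, or a mutual-information reformulation) but complete neither, so the second display of the theorem remains unproven. Moreover, your own high-INR asymptotics show this is not merely a technical omission: since $I(\sigma)=\sigma^2-\log\sigma+O(1)$ while $\sigma^4/(\sigma^2+3)=\sigma^2-3+o(1)$, the claimed inequality $I(\sigma)\ge\sigma^4/(\sigma^2+3)$ must \emph{fail} once $\log\sigma$ exceeds an absolute constant (numerically around $\sigma^2\approx 19$--$20$ dB). So no proof of the bound as stated, uniformly in $\sigma$, can exist; the theorem needs either a restriction to a bounded INR range or a corrected envelope before the lower-bound half of your program can be carried out. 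You should make that restriction explicit rather than leaving the constant $3$ to ``survive across the operating range'' as an open check.
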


\begin{proof}
Proof is provided in .
\end{proof}

\noindent \textbf{Remark:} This theorem highlights the fundamental trade-off between detection delay and false alarm rate. Specifically, increasing the detection threshold $h$ results in an exponential reduction of the FAR but at the cost of a corresponding linear increase in detection delay. The factor $1/I$ quantifies the efficiency of this trade-off and is tightly bounded between $(\sigma^2+1)/\sigma^4$ and $(\sigma^2+3)/\sigma^4$. 
The proof leverages asymptotic properties of the CUSUM algorithm along with bounding techniques involving the log-Bessel function. 

\section{On-line Interference Change Detection with Unknown Direction}

In practical satellite communication systems, the direction of interference, denoted by $\theta^{(I)}$, is typically unknown, significantly complicating the detection task. This section addresses this realistic scenario by first developing a GLR-based detection algorithm and then introducing a suitable estimation method for determining the unknown interference direction.

\subsection{GLR-based Change Detector}

When the interference direction is unknown, the conventional CUSUM procedure described previously cannot be applied directly, since it requires prior knowledge of the steering vector associated with the interference source. To overcome this limitation, we adopt the GLR-based change detector, which replaces the unknown parameter(s) with their estimates.

Given a sequence of observations $\{\mathbf{y}_k\}$, the GLR-based detector examines all potential change-points to jointly detect interference onset and estimate the unknown interference direction. Specifically, at each time instant $k$, the GLR procedure considers all possible interference onset times $j<k$, computes the corresponding interference direction estimate, and evaluates the accumulated log-likelihood ratio from samples $\{\mathbf{y}_j,\dots,\mathbf{y}_k\}$. The GLR statistic at instant $k$ is thus defined as:
\begin{equation}
    G_k = \max_{1 \le j < k}\sum_{i=j}^{k}\ell\left(r_i(\hat{\theta}_{j,k})\right),
\end{equation}
where $\hat{\theta}_{j,k}$ denotes the estimated interference direction based on observations from time $j$ to time $k$, and the projection scalar measurement is 
\begin{equation}
    r_i(\hat{\theta}_{j,k}) = |\mathbf{a}(\hat{\theta}_{j,k})^H\mathbf{y}_i|.
\end{equation}

A detection event is declared at the first time instant when the GLR statistic $G_k$ exceeds a prescribed threshold $h$
\begin{equation}
    \tau(h) = \inf\{k \geq 1 : G_k \geq h \}.
\end{equation}

Having clearly established the GLR-based detection framework, the remaining challenge is the accurate estimation of the unknown interference direction. We now introduce and discuss the selected estimation method in detail.

\subsection{Direction Estimation Using Root-MUSIC}

Direction-of-arrival (DoA) estimation is a classical and extensively studied problem in array signal processing. Among various established algorithms are the maximum likelihood (ML) \cite{stoica1990}, MUSIC \cite{schmidt1986}, and Estimation of Signal Parameters via Rotational Invariance Techniques (ESPRIT) \cite{roy1989}. Given the ULA configuration adopted here and the requirement for computational efficiency, we employ the Root-MUSIC algorithm \cite{rao1989}, a closed-form variant of MUSIC specifically tailored for ULAs.

The Root-MUSIC algorithm estimates the interference direction through the following steps.

\begin{enumerate}
    \item Compute the sample covariance matrix using the $N$ most recent snapshots
    \begin{equation}
        \mathbf{R} = \frac{1}{N}\sum_{i=k-N+1}^{k}\mathbf{y}_i\mathbf{y}_i^H.
    \end{equation}

    \item Perform eigenvalue decomposition of the covariance matrix $\mathbf{R}$ to identify the noise subspace $\mathbf{E}_n$, consisting of eigenvectors corresponding to the smallest eigenvalues.

    \item Construct the Root-MUSIC polynomial based on the identified noise subspace
    \begin{equation}
        P_{\text{MUSIC}}(z) = \sum_{l = -(M-1)}^{M-1} c_l z^{-l},
        \label{eq:P}
    \end{equation}
    where
    \begin{equation}
        c_l = \sum_{i = 1}^{M - |l|}\sum_{j = 1}^{M - 1}[\mathbf{E}_n]_{i,j}[\mathbf{E}_n]_{i + |l|,j}^*.
    \end{equation}

    \item Obtain the roots of $P_{\text{MUSIC}}(z)$ in (\ref{eq:P}) and select the root nearest to the unit circle. The estimated interference direction at time instant $k$ is then obtained from the argument of this root
    \begin{equation}
        \hat{\theta}_k = \arcsin\left(\frac{\lambda}{2\pi d}\arg(z)\right).
    \end{equation}
\end{enumerate}

It should be noted that the proposed Root-MUSIC-based DoA estimator offers a closed-form solution with high-resolution accuracy and robustness to noise. 
This feature makes Root-MUSIC-based DoA estimator particularly suitable for real-time implementation in satellite communication systems, where computational resources are often constrained.

\section{Numerical Results}
\begin{figure}[t]
  \centering
    \includegraphics[width=0.9\linewidth]{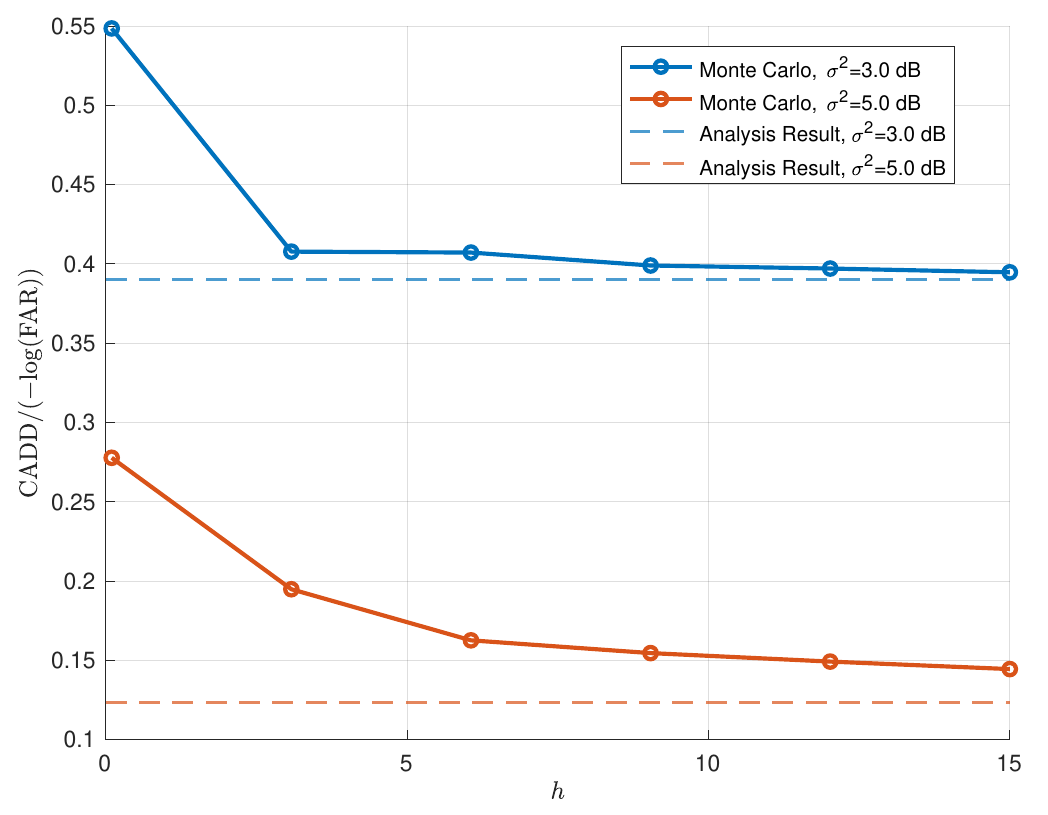}
  \caption{Numerical validation of the asymptotic performance of CADD and $-\log(\text{FAR})$ in Theorem 1.}
  \label{fig:Sim1}
\end{figure}

In this section, we present numerical simulations to evaluate and illustrate the performance of our proposed CUSUM-based and GLR-based interference detectors. Specifically, we consider a ULA consisting of $M=4$ antenna elements with an inter-element spacing fixed at half the wavelength. The system operates at a frequency of $1.6\,\text{GHz}$, which corresponds to the commonly utilized L-band frequency range in direct satellite-to-mobile communications (e.g., Starlink satellite network). Consequently, the potential interference signals simulated here also fall within this frequency band. In our simulations, the incoming signal direction is fixed at $90^\circ$ relative to the array axis (i.e., perpendicular incidence). 
Since the interference-to-noise ratio directly influences the performance of interference detection, we denote the interference-to-noise ratio by $\sigma^2$ in dB within our simulation scenarios. 
Additionally, to ensure statistically meaningful evaluations of the proposed algorithms, all the performance results presented here are obtained through extensive Monte-Carlo (MC) simulations, each involving over $10^{7}$ independent trials.

We first investigate the performance of the proposed CUSUM algorithm under the assumption that the interference direction is perfectly known. To illustrate clearly how interference energy impacts detection performance, Fig.~\ref{fig:Sim1} shows the ratio between CADD $T_C(h)$ and the negative logarithm of the FAR $-\log(R_F(h))$, evaluated at two representative interference-to-noise ratios: $\sigma^2 = 3\,\text{dB}$ and $5\,\text{dB}$. 
As we observe from Fig.~\ref{fig:Sim1}, the numerical simulations align closely with our theoretical predictions provided in Theorem~1. Specifically, with the increase in the detection threshold $h$, the simulated ratio $T_C(h)/[-\log(R_F(h))]$ consistently converges towards the theoretically derived value, validating the accuracy of the asymptotic analysis.
Furthermore, we clearly observe the expected trend that, under the same false alarm constraints, higher interference energy $\sigma^2$ leads to more effective and rapid detection. 
This indicates that stronger interference signals are naturally easier to detect and thus yield shorter detection delays. These observations corroborate the theoretical intuition highlighted previously and confirm the practical effectiveness and robustness of our proposed CUSUM-based interference detection algorithm in regimes of practical interest.

\begin{figure}[t]
  \centering
    \includegraphics[width=0.9\linewidth]{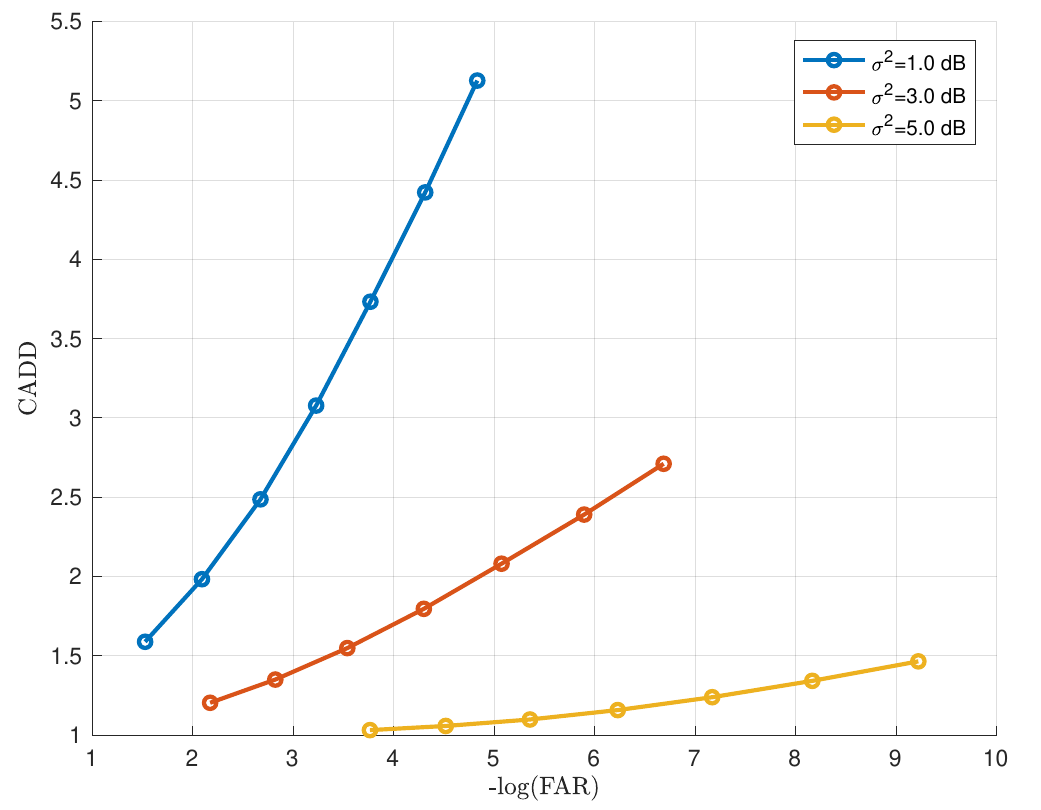}
  \caption{Performance of the CUSUM-based detector: Effective detection under low interference-to-noise ratios ($\sigma^2=1,3,5\,\text{dB}$).}
  \label{fig:Sim2}
\end{figure}
Next, Fig.~\ref{fig:Sim2} presents the trade-off between the CADD and negative logarithm of the FAR ($-\log(\text{FAR})$) of our proposed CUSUM-based interference detection algorithm for various interference-to-noise ratios. 
To further highlight the algorithm’s capability, we specifically consider scenarios where the interference is relatively weak.  
Indeed, as shown in Fig.~\ref{fig:Sim2}, even when the interference energy is merely $1\,\text{dB}$ higher than the noise level (i.e., a subtle and challenging detection scenario), the proposed CUSUM detector effectively controls the expected detection delay to below 3 samples for $-\log(\text{FAR})=3$. 

Next, we focus our attention on evaluating the performance of the generalized likelihood ratio (GLR)-based detection algorithm, designed explicitly for scenarios in which the interference direction is unknown. 
Fig.~\ref{fig:Sim3} shows that, even without prior knowledge of the interference direction, the GLR method is capable of effectively detecting interference, though inevitably with somewhat higher detection delays compared to the known-direction CUSUM algorithm. 
This increase in delay arises naturally from the additional uncertainty involved in jointly estimating the interference direction and performing detection. Specifically, under the scenario of weak interference presence ($\sigma^2=1\,\text{dB}$), the GLR detector can maintain an average detection delay around 6 samples at $-\log(\text{FAR})=3$, highlighting the algorithm’s robustness and practical utility in challenging real-world conditions.
Moreover, simulation results in Fig.~\ref{fig:Sim3} also confirm an intuitive and expected trend: as interference energy increases, the GLR detector achieves significantly lower detection delays, thereby demonstrating improved performance. For instance, when the interference-to-noise ratio reaches $4\,\text{dB}$, the detection delay becomes consistently small (approximately 2 samples), even under strict false alarm constraints. 
This performance enhancement illustrates that, despite the inherent complexity of unknown-direction scenarios, strong interference signals can still be detected promptly and reliably through the GLR-based approach.

\begin{figure}[t]
  \centering
    \includegraphics[width=0.9\linewidth]{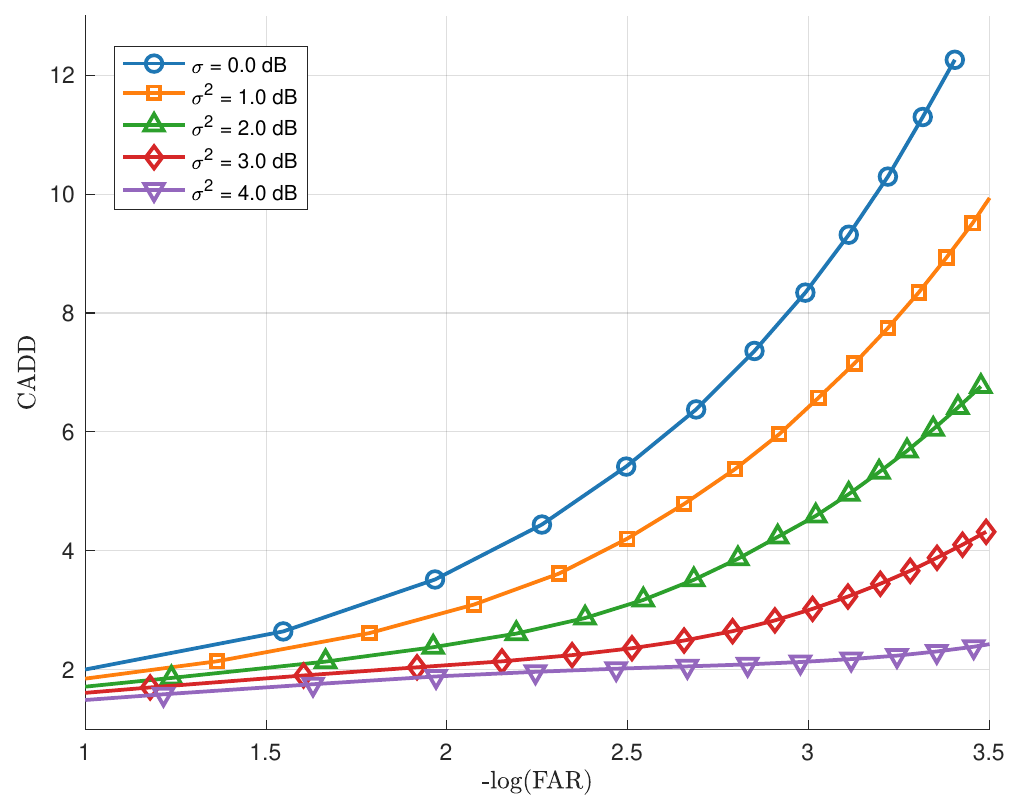}
  \caption{CADD versus $-\log(\text{FAR})$ for the GLR-based interference detector under various interference levels ($\sigma^2=0,1,2,3,4\,\text{dB}$).}
  \label{fig:Sim3}
\end{figure}

\section{Conclusion}
In this paper, we investigated online interference detection methods tailored for direct satellite-to-device communications. Exploiting the periodic idle intervals inherent in burst-mode satellite transmissions, we formulated the interference detection as a binary hypothesis testing problem based on Rayleigh and Rice signal distributions. We proposed a CUSUM-based detector for cases with known interference directions, establishing analytical optimality results. For practical scenarios with unknown interference directions, we further developed a GLR-based detector together with Root-MUSIC direction estimation. Numerical results confirmed the effectiveness of our proposed methods, demonstrating rapid detection with high accuracy, and thus validating their feasibility for practical satellite-to-device communication systems.

\bibliographystyle{IEEEtran}
\bibliography{main} 

\end{document}